\newtheorem{proposition}{Proposition}
\newtheorem{proof}{Proof}
\begin{document}
	
\title{SISO-OFDM and MISO-OFDM Counterparts for ``Wideband Waveforming for Integrated Data and Energy Transfer: Creating Extra Gain Beyond Multiple Antennas and Multiple Carriers"}

\author{Zhonglun~Wang, Jie~Hu, Kun~Yang}
\maketitle
\begin{abstract}
	In this article, we proposethe SISO-OFDM and MISO-OFDM based IDET systems, which are the counterparts of our optimal wideband waveforming strategy in \cite{WidebandWaveforming}. The fair-throughput and the sum-throughput optimisation problems are formulated. By exploiting the FP based alternating algorithm, we obtain the optimal fair-throughput and sum-throughput for both the SISO-OFDM and MISO-OFDM.
\end{abstract}

Notation: $\Vert\cdot\Vert_2$ represents the Euclidean norm operation. $\mathbb{E}\left[\cdot\right]$ represents the expection operation. $Var\left[\cdot\right]$ represents the variance operation and $Cov\left[\cdot\right]$ represents the covariance operation. $\mathcal{R}$ represents the real part of a complex variable. $`\bigotimes'$ represents the circular convolution operation. The superscript $`\dagger'$ represents the 
 conjugate transpose operation. $\mathbb{R}$ is the set of the real numbers.
\section{SISO-OFDM for IDET in Wideband Channel}	
\subsection{System Model}

We have $J$ receivers and $N$ subcarriers in the system. 
Both the transmitter and the receivers are equipped with a single antenna. 
There are a number of resolvable paths existing in the channel generated by the wideband OFDM signal. Denote the channel length of the $j$-th receiver as $L_{j}$. Then  the wideband channel vector is expressed as
\begin{align}
	\mathbf{h}_j=\left[ \underbrace{h_{j,0},\cdots,h_{j,l},\cdots,h_{j,L_{j}-1}}_{{L_j}}, \underbrace{0,\cdots,0}_{L-L_j} \right],
\end{align}
where we define the maximum channel length as $L=\max_jL_j$.


The symbol of the $j$-th receiver in the $k$-th subcarrier is denoted as $S_{j,k}$. All the receivers multiplex all the subcarriers together, while the multiplexed symbol in the $k$-th subcarrier can be expressed as
\begin{align}
	S_k=\sum_{j=1}^{J}S_{j,k}.
\end{align}
By invoking the inverse  fast Fourier transform (IFFT). Moreover,  the $n$-th symbol in the time-domain can be expressed as
\begin{align}
	s_n=\frac{1}{\sqrt{N}}\sum_{k=0}^{N-1}S_ke^{j2\pi nk/N},\forall k=0,1,\cdots,N-1.
\end{align}
In order to eliminate the inter-symbol interference (ISI) and gurantee the orthogonality of the subcarriers for the $j$-th receiver, we need to add the cyclic prefix (CP), whose period should be longer than the delay spread $\left(L_j-1\right)$ at least. Therefore, we add $\left(L-1\right)$ CP symbols to make sure we can remove the ISI for all the receivers. Therefore, the transmit symbols are sequentially expressed as
\begin{align}
	&x_0=s_0,\nonumber\\
    &\text{\ \ \ }\vdots\nonumber\\
	&x_{N-1}=s_{N-1}\nonumber\\
	&\text{\ \ \ }\vdots\nonumber\\
	&x_{N+L-2}=s_{N-2}.
\end{align}
We denote the transmit symbol vector $\mathbf{x}=\left[x_0,\cdots,x_{N+L-2}\right]$. 
Without considering the tail part of an OFDM symbol affecting the head part of the next symbol, the received symbols of the $j$-th receiver are expressed as
\begin{align}
	\mathbf{r}_{j}&=\mathbf{h}_j*\mathbf{x}
	=
		\begin{bmatrix}
			&h_{j,0} & 0   & \cdots & 0 & 0 &\cdots & 0\nonumber\\
			&\vdots&\vdots&\cdots&\vdots & 0 &\cdots& 0 \nonumber \\
			&h_{j,L-1}&h_{j,L-2}&\cdots&h_{j,0} & 0 &\cdots& 0 \nonumber\\
			&0 & h_{j,L-1} & \cdots &h_{j,1} &h_{j,0} &\cdots& 0\nonumber\\
			&\vdots&\vdots&\vdots&\vdots&\vdots&\vdots&\vdots\nonumber\\
			&0&0&\cdots&0&0&\cdots&h_{j,0}\nonumber\\
			&\vdots&\vdots&\vdots&\vdots&\vdots&\vdots&\vdots\nonumber\\
			&0&0&\cdots&0&0&\cdots&h_{j,L-1}			
		\end{bmatrix}
	\begin{bmatrix}
		&s_0\nonumber\\
		&\vdots\nonumber\\
		&s_{L-1}\nonumber\\
		&\vdots\nonumber\\
		&s_{N-1}\nonumber\\
		&s_0\nonumber\\
		&\vdots\nonumber\\
		&s_{L-2}
	\end{bmatrix}
	+\mathbf{z}_{r,j},
\tag{5}\label{eq:5}
\end{align}
where $\mathbf{z}_{r,j}$ is the circularly symmetric complex white Gaussian noise introduced by the receive antenna with a zeros mean and a variance of $\sigma_{0}^2$.
The $L$ to $(L+N-1)$ elements of the $\mathbf{r}_j$ are the circular convolution between the channel vector $\mathbf{h}_j$ and the OFDM symbol vector $\mathbf{s}$. After removing the CP,  the input signals to  the information decoder for the $j$-th receiver can be expressed as 
\begin{align}
	\mathbf{r}_{ID,j}&=\sqrt{1-\rho_j}\left[ y_{j,0},\cdots,y_{j,N-1} \right] + \sqrt{1-\rho_j}\mathbf{z}_j +\mathbf{v}_j\nonumber\\
	&=\sqrt{1-\rho_j}
	\begin{bmatrix}
		&h_{j,L-1}&h_{j,L-2}&\cdots&h_{j,0}&0&\cdots&0\nonumber\\
		&0&h_{j,L-1}&\cdots&h_{j,1}&h_{j,0}&\cdots&0\nonumber\\
		&\vdots&\vdots&\vdots&\vdots&\vdots&\vdots&\vdots\nonumber\\
		&0&0&\cdots&h_{j,L-1}&h_{j,L-2}&\cdots&0\nonumber\\
		&\vdots&\vdots&\vdots&\vdots&\vdots&\vdots&\vdots\nonumber\\
		&0&0&\cdots&0&0&\cdots&h_{j,0}
	\end{bmatrix}
\begin{bmatrix}
	&s_0\nonumber\\
	&s_1\nonumber\\
	&\vdots\nonumber\\
	&s_{N-1}
\end{bmatrix}\nonumber\\
&\text{\ \ \ }+\sqrt{1-\rho_j}\mathbf{z}_j+\mathbf{v}_j\nonumber\\
&=\sqrt{1-\rho_j}\mathbf{h}_{j}\otimes\mathbf{s}+\sqrt{1-\rho_j}\mathbf{z}_j+\mathbf{v}_j\tag{6}.
\end{align}
where $\mathbf{z}_j=\left[z_{r,j,L-1},\cdots,z_{r,j,N+L-2}\right]$  and $\mathbf{v}_j\sim\mathcal{CN}\left({0},\sigma_{cov}^2\mathbf{I}\right)$ is the passband-to-baseband conversion noise.

Then we decode the OFDM symbol by exploiting the orthogonality between the subcarriers, which has the same form as the Fast-Fourier-Transformation (FFT). The $k$-th decoding symbol for the $j$-th receiver in the frequency domain can be expressed as
\begin{small}
\begin{align}
	Y_{j,k}&= \frac{1}{\sqrt{N}}\sum_{n=0}^{N-1}\mathbf{r}_{ID,j}  \nonumber\\
	&=\sqrt{N}\sqrt{1-\rho_j}H_{j,k}S_k+\sqrt{1-\rho_j}Z_{j,k}+V_{j,k},
\end{align}
\end{small}
where $H_{j,k}$ and $S_k$ are the FFT of the $h_{j,n}$ and $s_n$, which can be expressed as
\begin{align}
	H_{j,k}&=\frac{1}{\sqrt{N}}\sum_{n=0}^{L-1}{h_{j,n}}e^{-j2\pi nk/N},\\
	S_k&=\frac{1}{\sqrt{N}}\sum_{n=0}^{N-1}{s_{n}}e^{-j2\pi nk/N},\\
	Z_{j,k}&=\frac{1}{\sqrt{N}}\sum_{n=0}^{N-1}{z_{j,k}}e^{-j2\pi nk/N},\\
	V_{j,k}&=\frac{1}{\sqrt{N}}\sum_{n=0}^{N-1}{v_{j,k}}e^{-j2\pi nk/N}.
\end{align}
The noise $Z_{j,k}$ and $V_{j,k}$ have the same distribution with $z_{j,k}$ and $v_{j,k}$, respectively \cite{Fundementals_of_communication}.

Then the signal-to-interference-plus-noise ratio (SINR) $\gamma_{j,k}$ can be expressed as 
\begin{align}
	\gamma_{j,k} \left( \mathbf{p},\rho_{j} \right) &=\frac {(1-\rho_j)N\left| H_{j,k}\right|^2\mathbb{E}\left[S_{j,k}S_{j,k}^{\dagger}\right]}   {(1-\rho_j)\sum_{j'\neq j}N\left| H_{j',k}\right|^2\mathbb{E}\left[S_{j',k}S_{j',k}^{\dagger}\right]  + (1-\rho_j)\sigma_{0}^2+\sigma_{cov}^2 }\nonumber\\
	&=\frac {(1-\rho_j)N\left| H_{j,k}\right|^2p_{j,k}}   {(1-\rho_j)\sum_{j'\neq j}N\left| H_{j',k}\right|^2p_{j',k}  + (1-\rho_j)\sigma_{0}^2 + \sigma_{cov}^2 },\label{eq:eq11}
\end{align}
where $\mathbf{p}=\left[ [p_{1,0},\cdots,p_{1,N-1}]^T,\cdots,[p_{J,0},\cdots,p_{J,N-1}]^T \right]$.
The throughput $R_j\left( \mathbf{p} , \rho_{j} \right)$ is  expressed as
\begin{align}
	R_j\left( \mathbf{p} , \rho_{j} \right)= \frac{B}{N}\cdot\frac{N}{N+L-1}  \sum_{k=0}^{N-1}\log_2\left( 1+\gamma_{j,k}\left( \mathbf{p} , \rho_{j} \right) \right),\text{\ bit/s}.\label{eq:eq12}
\end{align}
Let the average transmit power of each symbol $s_n$ be $P_{tx}^{OFDM}$. According to the Perseval's theorem \cite{DSP}, we can obtain
\begin{align}
			\frac{1}{{N}}\sum_{k=0}^{N-1}\left| S_k\right|^2=P_{tx}^{OFDM},\\
			\frac{1}{{N}}\sum_{n=0}^{N-1}\left| s_n\right|^2=P_{tx}^{OFDM},\\
			\frac{1}{{N+L-1}}\sum_{n=0}^{N+L-2}\left| x_n\right|^2=P_{tx}^{OFDM}.
\end{align}
Therefore, the power constraints for the OFDM based IDET system can be expressed as
\begin{align}
	0\leq \sum_{k=0}^{M-1}\mathbb{E}\left[ S_kS_k^\dagger \right] =\sum_{k=0}^{M-1}\sum_{j=1}^{J}\mathbb{E}\left[ S_{j,k}S_{j,k}^\dagger \right]=\sum_{k=0}^{N-1}\sum_{j=1}^{J}p_{j,k}\leq NP_{tx}^{OFDM}.
\end{align}

	The  RF signal for the energy harvester of the $j$-th receiver can be expressed as
	\begin{align}
		\mathbf{r}_{EH,j}=\sqrt{\rho_j}\left[ \mathbf{r}_{CP,j}^T , \mathbf{r}_{OS,j}^T \right]^T,
	\end{align}
where $\mathbf{r}_{CP,j}$ is the received CP symbol with considering the tail part of an OFDM symbol affecting the  head part of the next symbol and $r_{OS,j}$ is the receive OFDM symbol.
The power for the energy harvester of the $j$-th receiver is $P_{EH,j}=\frac{{\rho_j}}{N+L-1}\left(E_{CP,j}+E_{OS,j}\right)$, where $E_{CP,j}$ is the energy of the received CP symbols and $E_{OS,j}$ is the energy of the received OFDM symbols.    The received CP symbols can be expressed
	$$ \left\{
\begin{aligned}
	r_{CP,j,0} & = h_{j,0} s_{0} + h_{j,L-1} s_{0}' + h_{j,L-2} s_{1}' + \cdots + h_{j,1} s_{L-2}' , \\
	r_{CP,j,1} & = h_{j,1} s_{0} + h_{j,2} s_{1} + h_{j,L-1} s_{1}' + \cdots + h_{j,2} s_{L-2}',\\
	&\text{\ \ \ \ \ \ \ \ \ \ \ \ \ \ \ \ \ \  \ }\vdots\\
	r_{CP,j,n} & = h_{j,n} s_0 + \cdots + h_{j,0} s_{n} + h_{j,L-1} s_{n}' + \cdots + h_{j,n+1} s_{L-2},\\
	&\text{\ \ \ \ \ \ \ \ \ \ \ \ \ \ \ \ \ \  \ }\vdots\\
	r_{CP,j,L-2} & = h_{j,L-2} s_0 +h_{j,L-3} s_{1} +  \cdots + h_{j,0} s_{L-2} + h_{j,L-1} s_{L-2}'.
\end{aligned}
\right.
$$
\begin{proposition}
	The time domain symbol $s_n \sim \mathcal{CN}\left( 0 , P_{tx}^{OFDM} \right) $, for $\forall n=0,\cdots,N-1$.
\end{proposition}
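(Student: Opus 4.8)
The plan is to verify the three defining properties of a circularly symmetric complex Gaussian distribution—zero mean, second moment equal to $P_{tx}^{OFDM}$, and vanishing pseudo-variance—and then to invoke a central-limit argument to justify the Gaussian shape itself. Throughout, I would lean on the standard OFDM modulation assumption that the frequency-domain data symbols $S_{j,k}$ are mutually independent across both the receiver index $j$ and the subcarrier index $k$, each zero-mean with second moment $\mathbb{E}\left[S_{j,k}S_{j,k}^\dagger\right]=p_{j,k}$ (this independence is already implicit in the interference treatment of $\gamma_{j,k}$ in \eqref{eq:eq11}). It follows immediately that the multiplexed symbol $S_k=\sum_j S_{j,k}$ is zero-mean with $\mathbb{E}\left[|S_k|^2\right]=\sum_j p_{j,k}$, and that the $\{S_k\}$ are independent across $k$.

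First I would handle the mean. Because $s_n=\frac{1}{\sqrt{N}}\sum_{k=0}^{N-1}S_k e^{j2\pi nk/N}$ is a linear functional of the $S_k$, linearity of expectation gives $\mathbb{E}[s_n]=\frac{1}{\sqrt{N}}\sum_k \mathbb{E}[S_k]\,e^{j2\pi nk/N}=0$. For the variance I would expand $\mathbb{E}[|s_n|^2]=\frac{1}{N}\sum_{k,k'}\mathbb{E}\left[S_k S_{k'}^\dagger\right]e^{j2\pi n(k-k')/N}$; independence across subcarriers annihilates every cross term, leaving the $n$-independent quantity $\frac{1}{N}\sum_k \mathbb{E}[|S_k|^2]$. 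Recognising this via Parseval's theorem as $\frac{1}{N}\sum_k\mathbb{E}[|S_k|^2]=P_{tx}^{OFDM}$ (equivalently, taking expectations in the power constraint at equality) yields $\mathbb{E}[|s_n|^2]=P_{tx}^{OFDM}$. For circular symmetry I would carry out the analogous computation of the pseudo-variance $\mathbb{E}[s_n^2]=\frac{1}{N}\sum_{k,k'}\mathbb{E}\left[S_k S_{k'}\right]e^{j2\pi n(k+k')/N}$ and argue it vanishes, which holds once the input constellation symbols are themselves circularly symmetric so that $\mathbb{E}[S_k S_{k'}]=0$.

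The Gaussianity is the substantive step. Here I would observe that $s_n$ is a normalised sum of the $N$ independent contributions $\frac{1}{\sqrt{N}}S_k e^{j2\pi nk/N}$ and invoke the central limit theorem. Since these summands are independent but \emph{not} identically distributed—their variances $\sum_j p_{j,k}$ vary with $k$—the appropriate instrument is the Lindeberg–Feller CLT rather than the classical i.i.d.\ version. I would verify the Lindeberg condition using the uniform boundedness of finite-alphabet constellation moments (bounded support makes the condition automatic), concluding that the standardised real and imaginary parts of $s_n$ converge jointly to a bivariate normal; combined with the moment computations above this delivers $s_n\sim\mathcal{CN}\left(0,P_{tx}^{OFDM}\right)$.

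The main obstacle is precisely this CLT step: one must confirm the Lindeberg condition despite the non-identical variances, and, strictly speaking, the conclusion is asymptotic in $N$, so for finite $N$ the distribution is only approximately Gaussian. In the operating regime of interest—large $N$, as is typical in OFDM—the approximation is excellent, and I expect the proof to proceed by adopting the large-$N$ limit as the working assumption, exactly as is standard in the OFDM literature \cite{Fundementals_of_communication}.
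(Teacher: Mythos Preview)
Your moment computations (mean, variance, pseudo-variance) match the paper's, but the route to Gaussianity is genuinely different. The paper does \emph{not} invoke a central limit theorem: it simply takes as a modelling assumption that the frequency-domain symbols $S_0,\dots,S_{N-1}$ are themselves independent circularly symmetric complex Gaussian random variables, and then observes that $s_n$ is a (deterministic) linear combination of independent Gaussians and is therefore exactly Gaussian for every finite $N$. With that assumption in hand, the whole proof reduces to the two-line mean/variance calculation you already wrote down; no Lindeberg condition, no asymptotics, no approximation caveat.

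Your CLT argument is the standard one when the inputs are drawn from a finite constellation rather than a Gaussian codebook, and it is more broadly applicable in that sense. The price is exactly the one you flag: the conclusion is only asymptotic in $N$, so strictly speaking you prove $s_n\approx\mathcal{CN}(0,P_{tx}^{OFDM})$ rather than the exact statement of the proposition. The paper sidesteps this by building Gaussianity into the input model (consistent with its later MISO section, where it explicitly writes $S_{j,m,k}\sim\mathcal{CN}(0,1)$), which buys an exact result at the cost of a stronger hypothesis. If you adopt the same Gaussian-input assumption, your proof collapses to the paper's; if you keep finite constellations, your CLT route is the right one but delivers a slightly weaker conclusion than the proposition claims.
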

\begin{proof}
	See Appendix A.
\end{proof}
\begin{proposition}
	The symbol $s_{n_1}$ in the current OFDM period and the symbol $s_{n_2}'$ in the former OFDM period are independent each other, for $\forall n_1,n_2$.
\end{proposition}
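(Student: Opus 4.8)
The plan is to reduce the independence claim to a second-order (covariance) computation by exploiting the joint Gaussianity of the time-domain symbols. First I would express both symbols through the IFFT relation of the excerpt, writing $s_{n_1}=\frac{1}{\sqrt{N}}\sum_{k=0}^{N-1}S_k e^{j2\pi n_1 k/N}$ for the current OFDM period and $s_{n_2}'=\frac{1}{\sqrt{N}}\sum_{k'=0}^{N-1}S_{k'}' e^{j2\pi n_2 k'/N}$ for the former period, where $S_k$ and $S_{k'}'$ denote the multiplexed frequency-domain symbols of the two periods. The key structural fact I would invoke is the standard OFDM modeling assumption that the information symbols carried by distinct OFDM periods are independent and zero-mean, since each period conveys a fresh, separately generated block of data.

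Next I would appeal to Proposition 1, which establishes that every time-domain symbol is zero-mean circularly symmetric complex Gaussian. Because $s_{n_1}$ and $s_{n_2}'$ are linear combinations of the jointly Gaussian information symbols of the two periods, the pair $(s_{n_1}, s_{n_2}')$ is itself jointly complex Gaussian. For jointly circularly symmetric complex Gaussian variables, independence is equivalent to the simultaneous vanishing of the covariance $\mathbb{E}\left[ s_{n_1}(s_{n_2}')^\dagger \right]$ and the pseudo-covariance $\mathbb{E}\left[ s_{n_1}s_{n_2}' \right]$; hence it suffices to verify that these two quantities are zero.

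The computation itself is then routine. Expanding the covariance gives $\mathbb{E}\left[ s_{n_1}(s_{n_2}')^\dagger \right]=\frac{1}{N}\sum_{k,k'}e^{j2\pi(n_1 k - n_2 k')/N}\,\mathbb{E}\left[ S_k (S_{k'}')^\dagger \right]$, and since the current-period symbol $S_k$ is independent of the former-period symbol $S_{k'}'$ with both being zero-mean, every term factorises as $\mathbb{E}\left[ S_k (S_{k'}')^\dagger \right]=\mathbb{E}\left[ S_k \right]\mathbb{E}\left[ (S_{k'}')^\dagger \right]=0$, so the entire double sum vanishes. An identical argument drives the pseudo-covariance to zero. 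Consequently $(s_{n_1}, s_{n_2}')$ are uncorrelated, and being jointly Gaussian, they are independent, which is the assertion of the proposition.

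I would expect the main obstacle to be not the covariance algebra but the clean justification of joint Gaussianity together with the cross-period independence assumption. One must be explicit that the whole collection of information symbols spanning both OFDM periods is jointly Gaussian, so that their linear images remain jointly Gaussian and uncorrelatedness can be legitimately upgraded to independence, and one must clearly state that symbols belonging to different periods are generated independently. Once these modeling facts are pinned down, the conclusion follows immediately from the zero-mean property supplied by Proposition 1.
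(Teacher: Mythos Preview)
Your proposal is correct and follows essentially the same route as the paper: express $s_{n_1}$ and $s_{n_2}'$ through the IFFT, expand the cross-covariance into a double sum, use the independence and zero mean of the frequency-domain symbols across periods to make every term $\mathbb{E}[S_{k_1}(S_{k_2}')^\dagger]$ vanish, and then invoke the fact that uncorrelated jointly Gaussian variables are independent. Your treatment is in fact slightly more careful than the paper's, since you explicitly flag the need for joint Gaussianity and also check the pseudo-covariance, whereas the paper computes only $\mathbb{E}[s_{n_1}(s_{n_2}')^\dagger]$ and appeals directly to the Gaussian uncorrelated-implies-independent fact.
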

\begin{proof}
	See Appendix B.
\end{proof}
According to Theorem 1 and Theorem 2, the energy carried by the received CP can be expressed as
\begin{align}
	E_{CP,j} & = \mathbb{E}\left[ \sum_{n=0}^{L-2} \left| r_{CP,j,n} \right| \right]\nonumber\\
	& = \mathbb{E}\left[ \sum_{n=0}^{L-2} \left| \left( \sum_{i=0}^{n}h_{j,i}s_{n-i} + \sum_{i=n}^{L-2}h_{j,L-1+n-i}s_{i}' + z_{r,j,n} \right)\right|^2 \right]\nonumber\\
	& = \left(L-1\right)P_{tx}^{OFDM}\sum_{n=0}^{L-1}\left|h_{j,n}\right|^2 + \left(L-1\right) \sigma_{0}^2 .
\end{align}
The energy carried by the received OFDM symbols $E_{OS,j}$ can be expressed as
\begin{align}
	E_{OS,j}&=\sum_{n=L-1}^{N+L-2}\left| r_{j,n} \right|^2\nonumber\\
	&=N\sum_{k=0}^{N-1}\left| H_{j,k} \right|^2\left| S_k \right|^2 + \sum_{k=0}^{N-1} \left| Z_{j,k} \right|^2 \nonumber\\
	&=N\sum_{k=0}^{N-1}\sum_{j=1}^{J}p_{j,k}\left| H_{j,k} \right|^2 + N\sigma_{0}^2 .
\end{align}
Then the receive power	for the energy harvester of the $j$-th receiver can be reformulated as
\begin{align}
	P_{EH,j} \left( \mathbf{p} , \rho_{j} \right) &=\frac{\rho_j}{N+L-1} \left(E_{CP,j} +E_{OS,j} \right).\label{eq:eq20}
\end{align}
The RF-DC conversion is determined by the charateristics of the diode. Given an input RF power of $ P_{EH}(\mathbf{p},\rho_{j}) $, the output DC is expressed as
\begin{align}
	i_j^{SO}(\mathbf{p},\rho_{j}) \approx k_0 + k_1 P_{EH}(\mathbf{p},\rho_{j}) + k_2 P_{EH}^2(\mathbf{p},\rho_{j}), \label{eq:DC}
\end{align}
where the circuit related constants satisfies $k_0 \approx 0$, $k_1 \geq 0$ and $k_2 \geq 0$, respectively \cite{DC}.

\subsection{Fair-throughput Maximisation}

Our fair-throughput optimisition problem for the SISO-OFDM based IDET system can be formulated as
\begin{align}
	&\text{(P1):}\max_{ \left\{p_{j,k}\right\},\left\{\rho_j\right\} }R_{fair}^{SO}\label{Problem 1}\\
	&\text{s. t.    } R_j \left( \mathbf{p} , \rho_{j} \right) \geq R_{fair}^{SO},\forall j=1,\cdots,J,\tag{\ref{Problem 1}a}\label{{Problem1}a}\\
	&\text{         }\ \ \ \ \  i_j^{SO}(\mathbf{p},\rho_{j}) \geq \mathcal{I}_{j}^{SO},\tag{\ref{Problem 1}b}\label{{Problem1}b}\nonumber\\
	&\text{         }\ \ \ \ \  0\leq\sum_{k=0}^{N-1}\sum_{j=1}^{J}p_{j,k}\leq NP_{tx}^{OFDM},\tag{\ref{Problem 1}c}\label{{Problem1}c}\\
	&\text{         }\ \ \ \ \  0\leq\rho_j\leq1,\tag{\ref{Problem 1}d}\label{{Problem1}d}
\end{align}
where $R_{fair}^{SO}$ is the fair-throughput and $\mathcal{I}_j^{SO}$ is the DC requirement for the $j$-th receiver.
Observe from Eqs. \eqref{eq:eq11}-\eqref{eq:eq12} and Eq. \eqref{eq:eq20}, the expression of  $R_j\left( \mathbf{p} , \rho_{j} \right)$  is non-convex. By introducing a series of  auxiliary variables $\left\{ \psi_{j,k} \right\}$, we can reformulated the SINR $\gamma_{j,k}\left( \mathbf{p} , \rho_{j} \right)$ as 
\begin{align}
	\widetilde{\gamma}_{j,k}\left( \mathbf{p} , \rho_{j} , \boldsymbol{\psi}_j \right)&=  2\psi_{j,k} \left| H_{j,k} \right| \sqrt{ (1-\rho_j) N p_{j,k}  }     \nonumber\\
	&\text{\ \ \ }-\psi_{j,k}^2 \left( (1-\rho_j)\sum_{j'\neq j}N\left| H_{j',k}\right|^2p_{j',k}  + (1-\rho_j)\sigma_{0}^2 + \sigma_{cov}^2 \right) , \label{eq:eq22}
\end{align}
where $\boldsymbol{\psi}_j=\left[ \psi_{j,0},\cdots,\psi_{j,N-1} \right]^T$ \cite{FP}.
Moreover, the throughput for the $j$-th receiver can be reformulated as
\begin{align}
	\widetilde{R}_j\left( \mathbf{p} , \rho_{j} , \boldsymbol{\psi}_{j} \right)=\frac{B}{N+L-1}\sum_{k=0}^{N-1}\log_2\left( 1 + \widetilde{\gamma}_{j,k}\left( \mathbf{p} , \rho_{j} , \boldsymbol{\psi}_{j} \right) \right).
\end{align}
The constraint \eqref{{Problem1}b} is non-convex due to the quadratic structure of Eq. \eqref{eq:DC} with respect to $P_{EH,j}(\mathbf{p},\rho_{j})$. By adopting the quadratic transformation, the constraint \eqref{{Problem1}b} is reformulated as
\begin{align}
	P_{EH,j}(\mathbf{p},\rho_{j})\geq \frac{-k_1+\sqrt{k_1^2+4k_2\mathcal{I}_j^{SO}}}{2k_2},\forall j=1,\cdots,N.\label{powerconstraint1}
\end{align}

Then  (P1) can be reformulated as
\begin{align}
	&\text{(P2):}\max_{ \left\{p_{j,k}\right\},\left\{\rho_j\right\},\left\{\psi_{j,k}\right\} }\widetilde{R}_{fair}^{SO}\label{Problem 2}\\
	&\text{s. t.    } \widetilde{R}_j\left( \mathbf{p} , \rho_{j} , \boldsymbol{\psi}_{j} \right)\geq \widetilde{R}_{fair}^{SO},\forall j=1,\cdots,J,\tag{\ref{Problem 2}a}\label{{Problem2}a}\\
	&\text{         }\ \ \ \ \  \psi_{j,k}\in\mathbb{R},\forall j,k,\tag{\ref{Problem 2}b}\label{{Problem2}b}\nonumber\\
	&\text{         }\ \ \ \ \  \eqref{powerconstraint1},\eqref{{Problem1}c},
\end{align}
By alternatively optimising the $\left\{ p_{j,k} \right\}$, $\left\{ \rho_j \right\}$ and $\left\{ \psi_{j,k} \right\}$, we can obtain the optimal solution for (P2), which is also optimal  for (P1). First, we optimise $\left\{p_{j,k}\right\}$ by fixing $\left\{\rho_j\right\}$ and $\psi_{j,k}$, which can be reformulated as 
\begin{align}
	&\text{(P2-1):}\max_{ \left\{p_{j,k}\right\} }\widetilde{R}_{fair}^{SO}\label{Problem 2-1}\\
	&\text{s. t.    } \eqref{{Problem2}a}, \eqref{powerconstraint1},\eqref{{Problem1}c}.\nonumber
\end{align}
  By fixing $\left\{p_{j,k}\right\}$ and $\left\{\psi_{j,k}\right\}$, the optimisition problem (P2) can be reformulated as
  \begin{align}
  	&\text{(P2-2):}\max_{ \left\{\rho_j\right\} }\widetilde{R}_{fair}^{SO}\label{Problem 2-2}\\
  	&\text{s. t.    }  \eqref{{Problem2}a},
     \eqref{powerconstraint1},\eqref{{Problem1}d}.\nonumber
  \end{align}
	Then we fix $\left\{p_{j,k}\right\}$ and $\left\{ \rho_j \right\}$, we can obtain the optimal $\left\{\psi_{j,k}\right\}$ expressed as
	\begin{align}
		\psi_{j,k}^*&=\frac{\Vert H_{j,k} \Vert_2 \sqrt{ (1-\rho_j) N p_{j,k}  }}{ (1-\rho_j)\sum_{j'\neq j}N\Vert H_{j',k}\Vert_2^2p_{j',k}  + (1-\rho_j)\sigma_{0}^2 + \sigma_{cov}^2 }.\label{eq:eq27}
	\end{align}
We summarize the process in Algorithm 1. The convergence of the Algorithm 1 is proved  in Appexdix C. Since (P1) is a FP pproblem with the convex constraints, we can obtain the global optimal solution \cite{FP}.
\begin{algorithm}[!t]
	\linespread{1}\selectfont
	\caption{FP based alternating optimisation for solving (P2).}
	\label{alg:alg1}
	\footnotesize
	\begin{algorithmic}[1]
		\REQUIRE\
		Channel fading coefficient of $\mathbf{h}_j$; Transmit power of the BS $P_{tx}$;  Bandwidth of the signal $B$; Sampling factor $\mathcal{D}$; DC requirement $\mathcal{I}_j^{SO}$; Error tolerance $\epsilon$.
		\ENSURE\
		Optimal fair-throughput $R_{fair}^{OFDM*}$.
		\STATE Initialize $\{\rho\}\leftarrow {1}$,  $\{\psi_{j,k}\}\leftarrow {1}$, $\widetilde{R}_{fair}^{SO1}\leftarrow \epsilon$, $\widetilde{R}_{fair}^{SO2}\leftarrow -\epsilon$;
		\WHILE{$|\widetilde{R}_{fair}^{SO1}-\widetilde{R}_{fair}^{SO2}|\geq\epsilon$}
		\STATE $\widetilde{R}_{fair}^{SO2}\leftarrow\widetilde{R}_{fair}^{SO1}$;
		\STATE Update $\left\{ p_{j,k}^* \right\}$ by solving (P2-1);
		\STATE Update $\left\{\rho_j^*\right\}$ by solving (P2-2);
		\STATE Update $\{{\psi}_{j,k}^*\}$ by Eq. \eqref{eq:eq27};
		\STATE Update $\widetilde{R}_{fair}^{SO1}\leftarrow \widetilde{R}_{fair}^{SO}$;
		\ENDWHILE
		\STATE Update $R_{fair}^{OFDM*} \leftarrow \widetilde{R}_{fair}^{OFDM}$
		\RETURN  $R_{fair}^{OFDM*}$.
	\end{algorithmic}
\end{algorithm}
\subsection{Sum-Throughput Maximization}
The sum-throughput maximisation problem is formulated as
\begin{align}
&\text{(P3):}\max_{ \left\{p_{j,k}\right\},\left\{\rho_j\right\} }  R_{sum}^{SO}=\sum_{j=1}^{J} R_{j}\left( \mathbf{p} , \rho_{j} \right)     \label{Problem 3}\\
&\text{s. t.    } \eqref{powerconstraint1},\eqref{{Problem1}c},\eqref{{Problem1}d}.\nonumber
\end{align}
Similarily, by  introducing the auxilary variables $\{ \psi_{j,k} \}$ and adopting the FP technology, (P3) is reformulated as
\begin{align}
&\text{(P4):}\max_{ \left\{p_{j,k}\right\},\left\{\rho_j\right\}, \{ \psi_{j,k} \} }  \widetilde{R}_{sum}^{SO} = \sum_{j=1}^{J} \widetilde{R}_{j} \left( \mathbf{p} , \rho_{j} , \boldsymbol{\psi}_{j} \right)    \label{Problem 4}\\
&\text{s. t.    } \eqref{powerconstraint1},\eqref{{Problem1}c},\eqref{{Problem1}d}.\nonumber
\end{align}
(P4) can be solved by alternatively optimising $\{p_{j,k}\}$, $\{\rho_{j}\}$ and $\{ \psi_{j,k} \}$. First, given fixed $\{ \rho_{j} \}$ and $ \{ \psi_{j,k} \} $, (P4) is reformulated as
\begin{align}
&\text{(P4-1):}\max_{ \left\{p_{j,k}\right\} } \widetilde{R}_{sum}^{SO} = \sum_{j=1}^{J} \widetilde{R}_{j}  \left( \mathbf{p} , \rho_{j} , \boldsymbol{\psi}_{j} \right)   \label{Problem 4-1}\\
&\text{s. t.    } \eqref{powerconstraint1},\eqref{{Problem1}c}.\nonumber
\end{align}
By giving fixed $\{ p_{j,k} \}$ and $ \{ \psi_{j,k} \} $, (P4) is reformulated as
\begin{align}
&\text{(P4-2):}\max_{ \left\{p_{j,k}\right\},\left\{\rho_j\right\}, \{ \psi_{j,k} \} } \widetilde{R}_{sum}^{SO} = \sum_{j=1}^{J} \widetilde{R}_{j} \left( \mathbf{p} , \rho_{j} , \boldsymbol{\psi}_{j} \right)    \label{Problem 4-2}\\
&\text{s. t.    } \eqref{powerconstraint1},\eqref{{Problem1}d}.\nonumber
\end{align}
The sub-problems (P4-1) and (P4-2) are convex, we can solve it by any convex optimisation tool. Finally, given fixed $\{ p_{j,k} \}$ and $\{ \rho_{j} \}$, (P4) is reformulated as
\begin{align}
&\text{(P4-3):}\max_{\{ \psi_{j,k} \} } \widetilde{R}_{sum}^{SO} = \sum_{j=1}^{J} \widetilde{R}_{j}\left( \mathbf{p} , \rho_{j} , \boldsymbol{\psi}_{j} \right).     \label{Problem 4-3}
\end{align}
The unconstrained optimisation problem (P4-3) can be solved by letting $ \partial \widetilde{R}_{sum}^{SO} /\partial \psi_{j,k} = 0 $, where the optimal $\{ \psi_{j,k}^* \}$ is expressed as \eqref{eq:eq27}.
The FP based alternating algorithm for solving (P3) is detailed in Algorithm 2. The proof of the convergence for Algorithm 2 is similar to Algorithm 1. Since (P3) is a FP based sum-throughput maximisation problem with the convex constraints, we can obtain the local optimal solution \cite{FP}.

\begin{algorithm}[!t]
	\linespread{1}\selectfont
	\caption{FP based alternating optimisation for solving (P3).}
	\label{alg:alg2}
	\footnotesize
	\begin{algorithmic}[1]
		\REQUIRE\
		Channel fading coefficient of $\mathbf{h}_j$; Transmit power of the BS $P_{tx}$;  Bandwidth of the signal $B$; Sampling factor $\mathcal{D}$; DC requirement $\mathcal{I}_{j}^{SO}$; Error tolerance $\epsilon$.
		\ENSURE\
		Optimal sum-throughput $R_{sum}^{SO*}$.
		\STATE Initialize $\{\rho\}\leftarrow {1}$,  $\{\psi_{j,k}\}\leftarrow {1}$, $\widetilde{R}_{sum}^{SO1}\leftarrow \epsilon$, $\widetilde{R}_{sum}^{SO2}\leftarrow -\epsilon$;
		\WHILE{$|\widetilde{R}_{sum}^{SO1}-\widetilde{R}_{sum}^{SO2}|\geq\epsilon$}
		\STATE $\widetilde{R}_{sum}^{SO2}\leftarrow\widetilde{R}_{sum}^{SO1}$;
		\STATE Update $\left\{ p_{j,k}^* \right\}$ by solving (P4-1);
		\STATE Update $\left\{\rho_j^*\right\}$ by solving (P4-2);
		\STATE Update $\{{\psi}_{j,k}^*\}$ by Eq. \eqref{eq:eq27};
		\STATE Update $\widetilde{R}_{sum}^{SO1}\leftarrow \widetilde{R}_{sum}^{SO}$;
		\ENDWHILE
		\STATE Update $ {R}_{sum}^{SO*} \leftarrow \widetilde{R}_{sum}^{SO}  $
		\RETURN $R_{sum}^{SO*}$.
	\end{algorithmic}
\end{algorithm}

\subsection{Transmit Power Parameter}

The time-domain OFDM symbols and the direct-sequence wideband (DSW) up-sampling symbols are displayed in  Fig. 1(a) and Fig. 1(b), respectively. We transmit $\left(N+L-1\right)$ non-zero symbols with the average power $\overline{P}$ in the respective period for fairness. The transmit periods of the OFDM and the DSW are $T_{OFDM}=\left(N+L-1\right)T_s$ and $T_{DSW}=\left(N+L-1\right)\mathcal{D}T_s$, where $T_s=1/B$ is the sampling period and $B$ is the bandwidth both of the OFDM signals and DSW signals. The average transmit power for the DSW  can be expressed as  
	\begin{align}
		\overline{P} & = \frac{1}{\left(N+L-1\right)\mathcal{D}}\sum_{n=0}^{N+L-2}\mathbb{E}\left[\left| s_{DSW,n\mathcal{D}}^{\mathcal{D}} \right|^2\right].
	\end{align}
\begin{figure}[!t]
	\centering
	\includegraphics [width=1\linewidth] {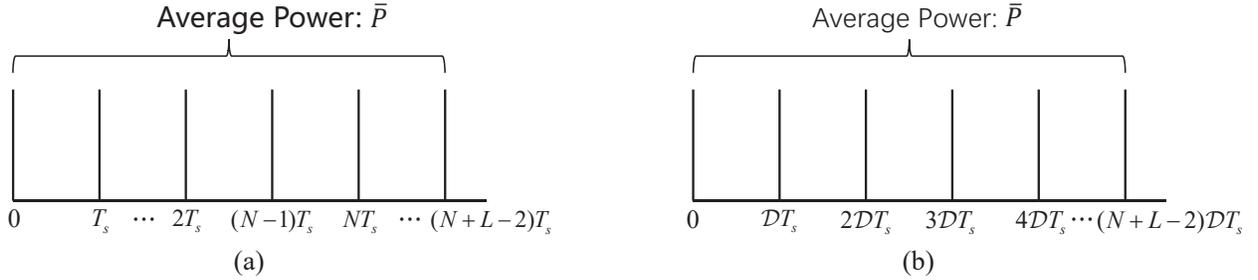}
	\caption{Transmit symbol: (a) OFDM symbol $s_n$; (b) DSW up-sampling symbol $s_{DSW,n}^{\mathcal{D}}$.}
	\setlength{\belowcaptionskip}{-10pt}
	\label{fig:fig1}
\end{figure}
Assume $s_{DSW,n\mathcal{D}}^{\mathcal{D}}\sim\mathcal{CN}\left(0,P_{tx}^{DSW}\right)$, for $\forall n=0,\cdots,N+L-2$. We can obtain the average power of each symbol $P_{tx}^{DSW}$  as
\begin{align}
	P_{tx}^{DSW}&=\mathbb{E} \left[ \left| s_{DSW,n\mathcal{D}}^{\mathcal{D}} \right|^2 \right]=\mathcal{D}\overline{P.}
\end{align}
Similarly, the average transmit power for the OFDM can be expressed as
\begin{align}
	\overline{P}=\frac{1}{N+L-1}\sum_{n=0}^{N+L-2}\mathbb{E}\left[ \left| s_{n} \right|^2 \right]. 
\end{align}
As mentioning in  $s_n\sim \mathcal{CN}\left(0,P_{tx}^{OFDM}\right)$, the average transmit power of each OFDM symbol $P_{tx}^{OFDM}$ can be expressed as
\begin{align}
	P_{tx}^{OFDM}=\mathbb{E}\left[ \left| s_n \right|^2 \right] =\overline{P}. 
\end{align}
Finally, for comparing the DSW based IDET scheme and the OFDM based IDET scheme fairly, we must let the ratio between the average power of the OFDM symbols and the average power of the non-zero DSW symbols satisfy $P_{tx}^{DSW}/P_{tx}^{OFDM}=\mathcal{D}$.

\section{MISO OFDM}
\subsection{System Model}
We have an OFDM transmitter with $M$ antennas and $J$ single antenna receivers. The number of the OFDM subcarriers is denoted as $N$. Assume the wideband resolvable channel from the $m$-th transmit antenna to the $j$-th receiver as $\mathbf{h}_{j,m}=\left[ h_{j,m,0}\cdots,h_{j,m,L} \right]$, where  the channel length for all links is $L\leq N$. The vector $\mathbf{H}_{j,k}=[H_{j,1,k},\cdots,H_{j,M,k}]$ is defined as the FFT channel in the $k$-th subcarrier between the transmitter and the $j$-th receiver, where the $m$-th element $H_{j,m,k}$ is expressed as
\begin{align}
	H_{j,m,k}=\frac{1}{\sqrt{N}}\sum_{n=0}^{L-1}h_{j,m,n}e^{-j2\pi nk/N},\forall k=0,\cdots,N-1.
\end{align}
The $k$-th symbol of the $j$-th receiver in the $m$-th antenna is $S_{j,m,k}$, which satisfies  $S_{j,1,k}=\cdots=S_{j,M,k}$ and $S_{j,m,k}\sim \mathcal{CN}(0,1)$. Moreover, the optimal beamforming vector of the $j$-th receiver for the $k$-th subcarrier is defined as $\mathbf{W}_{j,k}=[W_{j,1,k},\cdots,W_{j,M,k}]^T$, which satisfies $\Vert \mathbf{W}_{j,k} \Vert_2^2\leq p_{j,k}$. The $k$-th symbol of the $m$-th antenna $S_{m,k}$ is expressed as
\begin{align}
	S_{m,k}=\sum_{j=1}^{J}W_{j,m,k}S_{j,m,k}.
\end{align}
Then time-domain symbol  $s_{m,n}$ is  expressed as 
\begin{align}
	s_{m,n}=\frac{1}{\sqrt{N}}\sum_{k=0}^{N-1}S_{m,k}e^{j2\pi nk/N}, \forall n=0,\cdots,N-1.
\end{align}
In order to eliminate the inter-symbol interference (ISI) and gurantee the orthogonality of the subcarriers for the $j$-th receiver, we also need to add the CP whose period is longger the delay spread $\left(L_j-1\right)$ at least. Therefore, we add $\left(L-1\right)$ CP symbols to make sure we can remove the ISI for all the receivers.  The transmit symbols are then  expressed as
	$$ \left\{
	\begin{small}
\begin{aligned}
x_{m,0} & =s_{m,0},\\
&\vdots \\
x_{m,N-1}&=s_{m,N-1},\\
x_{m,N}&=s_{m,0},\\
\vdots\\
x_{m,N+L-2}&=s_{m,L-2}.
\end{aligned}
\end{small}
\right.
$$
The received symbol of the $j$-th receiver $\mathbf{r}_j$ is expressed as
\begin{small}
 \begin{align}
 	\mathbf{r}_j&=\sum_{m=1}^{M}\mathbf{h}_{j,m}*\mathbf{x}_m+\mathbf{z}_j\nonumber\\
 	&=\sum_{m=1}^{M}	
\begin{bmatrix}
&h_{j,m,0} & 0   & \cdots & 0 & 0 &\cdots & 0\nonumber\\
&\vdots&\vdots&\cdots&\vdots & \vdots  &\cdots& \vdots  \nonumber \\
&h_{j,m,L-1}&h_{j,m,L-2}&\cdots&h_{j,m,0} & 0 &\cdots& 0 \nonumber\\
&0 & h_{j,m,L-1} & \cdots &h_{j,m,1} &h_{j,m,0} &\cdots& 0\nonumber\\
&\vdots&\vdots&\vdots&\vdots&\vdots&\vdots&\vdots\nonumber\\
&0&0&\cdots&0&0&\cdots&h_{j,m,0}\nonumber\\
&\vdots&\vdots&\vdots&\vdots&\vdots&\vdots&\vdots\nonumber\\
&0&0&\cdots&0&0&\cdots&h_{j,m,L-1}			
\end{bmatrix}
\begin{bmatrix}
&s_{m,0}\nonumber\\
&\vdots\nonumber\\
&s_{m,L-1}\nonumber\\
&\vdots\nonumber\\
&s_{m,N-1}\nonumber\\
&s_{m,0}\nonumber\\
&\vdots\nonumber\\
&s_{m,L-2}
\end{bmatrix}
+
\begin{bmatrix}
&z_{j,0}\nonumber\\
&\vdots\nonumber\\
&z_{j,L-1}\nonumber\\
&\vdots\nonumber\\
&z_{j,N-1}\nonumber\\
&z_{j,N}\nonumber\\
&\vdots\nonumber\\
&z_{j,N+L-2}
\end{bmatrix},
 \end{align}
\end{small}
where $\mathbf{z}_j\sim \mathcal{CN}(0,\sigma_{0}^2\mathbf{I})$ is the antenna noise.  By removing the CP in the receivers, the information symbol of the $j$-th receiver  $\mathbf{r}_{ID,j}$ is expressed as
\begin{small}
\begin{align}
	\mathbf{r}_{ID,j}=&\sqrt{1-\rho_j}\sum_{m=1}^M\mathbf{h}_{j,m}\otimes \mathbf{s}_m+\sqrt{1-\rho_j}\widetilde{\mathbf{z}}_j+\widetilde{\mathbf{v}}_j\nonumber\\
	=&\sqrt{1-\rho_j}\sum_{m=1}^M	
	\begin{bmatrix}
	&h_{j,m,L-1}&h_{j,m,L-2}&\cdots&h_{j,m,0}&0&\cdots&0\nonumber\\
	&0&h_{j,m,L-1}&\cdots&h_{j,m,1}&h_{j,m,0}&\cdots&0\nonumber\\
	&\vdots&\vdots&\vdots&\vdots&\vdots&\vdots&\vdots\nonumber\\
	&0&0&\cdots&h_{j,m,L-1}&h_{j,m,L-2}&\cdots&0\nonumber\\
	&\vdots&\vdots&\vdots&\vdots&\vdots&\vdots&\vdots\nonumber\\
	&0&0&\cdots&0&0&\cdots&h_{j,m,0}
	\end{bmatrix}
	\begin{bmatrix}
	&s_{m,0}\nonumber\\
	&s_{m,1}\nonumber\\
	&\vdots\nonumber\\
	&s_{m,N-1}
	\end{bmatrix}\nonumber\\
	&+\sqrt{1-\rho_j}\widetilde{\mathbf{z}}_j+\widetilde{\mathbf{v}}_j,
\end{align} 
\end{small}
where $\widetilde{\mathbf{z}}_j$ is the antenna noise and $\widetilde{\mathbf{v}}_j\sim \mathcal{CN}(0,\sigma_{cov}^2\mathbf{I})$ is the RF-to-baseband conversion noise. Then we decode the OFDM symbols by exploiting the orthogonality between the subcarriers, which has the same form with the DFT. The $k$-th decoding symbol of the $j$-th receiver $Y_{j,k}$ is expressed as 
\begin{align}
	Y_{j,k}&=\frac{1}{\sqrt{N}}\sum_{n=0}^{N-1}r_{ID,j,n}e^{-j2\pi nk/N}\nonumber\\
	&=\sqrt{N}\sqrt{1-\rho_j}\sum_{m=1}^MH_{j,m,k}S_{m,k}+\sqrt{1-\rho_j}Z_{j,k}+V_{j,k}.\nonumber\\
	&=\sqrt{N}\sqrt{1-\rho_j}\sum_{m=1}^M\sum_{j=1}^{J}H_{j,m,k}S_{j,m,k}+\sqrt{1-\rho_j}Z_{j,k}+V_{j,k}.
\end{align}
Moreover, the average SINR $\gamma_{j,k}$ is expressed as
\begin{align}
	\gamma_{j,k} \left( \mathbf{W} , \rho_{j} \right) &=\frac{N(1-\rho_{j})\mathbf{H}_{j,k}\mathbf{W}_{j,k}\mathbf{W}_{j,k}^\dagger\mathbf{H}_{j,k}^\dagger }{ N(1-\rho_{j}) \sum_{j'\neq j} \mathbf{H}_{j,k}\mathbf{W}_{j',k}\mathbf{W}_{j',k}^\dagger\mathbf{H}_{j,k}^\dagger  +(1-\rho_{j})\sigma_{0}^{2} + \sigma_{cov}^{2}  }
\end{align}
where $\mathbf{W}=\left[ [\mathbf{W}_{1,0},\cdots,\mathbf{W}_{1,N-1}]^T,\cdots,[\mathbf{W}_{J,0},\cdots,\mathbf{W}_{J,N-1}]^T \right]$.
Finally, the throughput of the $j$-th receiver $R_j$ is expressed as
\begin{align}
	R_j\left( \mathbf{W} , \rho_{j} \right) =\frac{B}{N}\frac{N}{N+L-1}\sum_{k=0}^{N-1}\log_2(1+\gamma_{j,k}\left( \mathbf{W} , \rho_{j} \right)),\text{bit/s}.
\end{align}
Considering the tail part of an OFDM symbol affecting the the head part of the next symbol, the CP $\{r_{CP,j,m,k}\}$ can be expressed as
	$$ \left\{
\begin{aligned}
r_{CP,j,0} & = \sum_{m=1}^{M}h_{j,m,0} s_{m,0} + \sum_{m=1}^{M}h_{j,m,L-1} s_{m,0}' + \sum_{m=1}^{M}h_{j,m,L-2} s_{m,1}' + \cdots + \sum_{m=1}^{M}h_{j,m,1} s_{m,L-2}' , \\
r_{CP,j,1} & = \sum_{m=1}^{M}h_{j,m,1} s_{m,0} + \sum_{m=1}^{M}h_{j,m,2} s_{m,1} + \sum_{m=1}^{M}h_{j,m,L-1} s_{m,1}' + \cdots + \sum_{m=1}^{M}h_{j,m,2} s_{m,L-2}',\\
&\text{\ \ \ \ \ \ \ \ \ \ \ \ \ \ \ \ \ \  \ }\vdots\\
r_{CP,j,n} & = \sum_{m=1}^{M}h_{j,m,n} s_{m,0}  + \cdots + \sum_{m=1}^{M}h_{j,m,0} s_{m,n} + \sum_{m=1}^{M}h_{j,m,L-1} s_{m,n}' + \cdots + \sum_{m=1}^{M}h_{j,m,n+1} s'_{m,L-2},\\
&\text{\ \ \ \ \ \ \ \ \ \ \ \ \ \ \ \ \ \  \ }\vdots\\
r_{CP,j,L-2} & = \sum_{m=1}^{M}h_{j,m,L-2} s_{m,0} +\sum_{m=1}^{M}h_{j,m,L-3} s_{m,1} +  \cdots + \sum_{m=1}^{M}h_{j,m,0} s_{m,L-2} + \sum_{m=1}^{M}h_{j,m,L-1} s_{m,L-2}'.
\end{aligned}
\right.
$$ 
The energy  $E_{CP,j}$  harvested by  the $j$-th receiver  from CP is expressed as
\begin{small}
\begin{align}
	E_{CP,j}=&\mathbb{E}\left[ \sum_{n=0}^{L-2}\left| r_{CP,j,n} \right|^2        \right]\nonumber\\
	=&\mathbb{E}\left[ \sum_{n=0}^{L-2}\left| \sum_{i=0}^n\sum_{m=1}^Mh_{j,m,i}s_{m,n-i} +\sum_{i=n}^{L-2}\sum_{m=1}^Mh_{j,m,L-1+n-i}s'_{m,i} \right|^2        \right]+\sum_{n=0}^{L-2}\mathbb{E}\left[\left|z_{j,n}\right|^2\right]\nonumber\\
	=&\sum_{n=0}^{L-2} \sum_{i=0}^{n}\mathbb{E}\left[\left|\sum_{m=1}^{M}h_{j,m,i}s_{m,n-i} \right|^2 \right]                                   +   \sum_{n=0}^{L-2} \sum_{i=n}^{L-2}\mathbb{E}\left[\left|\sum_{m=1}^{M}h_{j,m,L-1+n-i}s_{m,i} \right|^2  \right] + \sum_{n=0}^{L-2}\mathbb{E}\left[\left|z_{j,n}\right|^2\right]\nonumber\\
	=&\sum_{n=0}^{L-2} \sum_{i=0}^{n}\mathbb{E}\left[\frac{1}{N}\left|\sum_{k=0}^{N-1}\sum_{j'=1}^{J}S_{j',m,k}e^{j2\pi (n-i)k/N}\sum_{m=1}^{M}h_{j,m,i}W_{j',m,k} \right|^2 \right]   \nonumber\\
	&+\sum_{n=0}^{L-2} \sum_{i=n}^{L-2}\mathbb{E}\left[\frac{1}{N}\left|\sum_{k=0}^{N-1}\sum_{j'=1}^{J}S'_{j',m,k}e^{j2\pi ik/N}\sum_{m=1}^{M}h_{j,m,L-1+n-i}W_{j',m,k} \right|^2 \right]  + \sum_{n=0}^{L-2}\mathbb{E}\left[\left|z_{j,n}\right|^2\right]  \nonumber\\
	=&\sum_{n=0}^{L-2} \sum_{i=0}^{L-1}\left[\frac{1}{N}\sum_{k=0}^{N-1}\sum_{j'=1}^{J}\left|\sum_{m=1}^{M}h_{j,m,i}W_{j',m,k} \right|^2 \right]  + \sum_{n=0}^{L-2}\mathbb{E}\left[\left|z_{j,n}\right|^2\right]    \nonumber\\
	\leq & \sum_{n=0}^{L-2} \sum_{i=0}^{L-1}\left[\frac{1}{N}\sum_{k=0}^{N-1}\sum_{j'=1}^{J}\sum_{m=1}^{M}\left|h_{j,m,i}\right|^2\left| \sum_{m=1}^MW_{j',m,k} \right|^2 \right]  + \sum_{n=0}^{L-2}\mathbb{E}\left[\left|z_{j,n}\right|^2\right]    \nonumber\\
	\leq & \frac{1}{N}\sum_{n=0}^{L-2} \sum_{i=0}^{L-1}\left[\left|\sum_{m=1}^{M}h_{j,m,i}\right|^2\sum_{k=0}^{N-1}\sum_{j'=1}^{J}\left| \sum_{m=1}^MW_{j',m,k} \right|^2 \right]  + \sum_{n=0}^{L-2}\mathbb{E}\left[\left|z_{j,n}\right|^2\right]    \nonumber\\
	\leq & \frac{M^2}{N}\left[\sum_{n=0}^{L-2} \sum_{i=0}^{L-1}\sum_{m=1}^{M}\left|h_{j,m,i}\right|^2\right]\left[\sum_{k=0}^{N-1}\sum_{j'=1}^{J} \sum_{m=1}^M\left|W_{j',m,k} \right|^2 \right]   + \sum_{n=0}^{L-2}\mathbb{E}\left[\left|z_{j,n}\right|^2\right]   \nonumber\\
	= & M^2(L-1)P_{tx}^{OFDM} \sum_{i=0}^{L-1}\sum_{m=1}^{M}\left|h_{j,m,i}\right|^2  + (L-1)\sigma_{0}^2.
\end{align} 
\end{small}
The energy $E_{OS,j}$ harvested from the OFDM symbols  is expressed as
\begin{small}
\begin{align}
	E_{OS,j}=&\mathbb{E}\left[\sum_{n=L-1}^{N+L-2}\left| r_{j,n} \right|^2\right]\nonumber\\
	=&N\sum_{k=0}^{N-1}\mathbb{E}\left[\left| \sum_{m=1}^{M}H_{j,m,k}S_{m,k} \right|^2\right]+\sum_{k=0}^{N-1}\mathbb{E}\left[\left| Z_{j,k} \right|^2\right]\nonumber\\
	=&N\sum_{k=0}^{N-1}\mathbb{E}\left[\left| \sum_{j'=1}^{J}\sum_{m=1}^{M}H_{j,m,k}W_{j',m,k}S_{j',m,k} \right|^2\right]+\sum_{k=0}^{N-1}\mathbb{E}\left[\left| Z_{j,k} \right|^2\right]\nonumber\\
	=&N\sum_{k=0}^{N-1}\sum_{j'=1}^{J}\mathbf{H}_{j,k}\mathbf{W}_{j',k}\mathbf{W}_{j',k}^\dagger\mathbf{H}_{j,k}^\dagger + N\sigma_0^2  
\end{align}
\end{small}
The total RF power harvested by the $j$-th receiver is formulated as
\begin{small}
\begin{align}
	P_{EH,j}\left( \mathbf{W} , \rho_{j} \right)=&\frac{\rho_j}{N+L-1}\left( E_{CP,j} + E_{OS,j} \right)\nonumber\\
	\leq & \frac{\rho_j}{N+L-1}\left( M^2(L-1)P_{tx}^{OFDM} \sum_{i=0}^{L-1}\sum_{m=1}^{M}\left|h_{j,m,i}\right|^2  + N\sum_{k=0}^{N-1}\sum_{j'=1}^{J}\mathbf{H}_{j,k}\mathbf{W}_{j',k}\mathbf{W}_{j',k}^\dagger\mathbf{H}_{j,k}^\dagger \right) +\rho_j\sigma_{0}^2\nonumber\\
	=& \widetilde{P}_{EH,j}\left( \mathbf{W} , \rho_{j} \right),
\end{align}
\end{small}
where $\widetilde{P}_{EH,j}\left( \mathbf{W} , \rho_{j} \right)$ is the upper bound of  $P_{EH,j}\left( \mathbf{W} , \rho_{j} \right)$. Given an input power of $\widetilde{P}_{EH,j}(\mathbf{W},\rho_{j})$, the output DC is expressed as
\begin{align}
	i_j^{MO}(\mathbf{W},\rho_{j}) \approx k_0 + k_1 \widetilde{P}_{EH,j}(\mathbf{W},\rho_{j}) + k_2 \widetilde{P}^2_{EH,j}(\mathbf{W},\rho_{j}).
\end{align}

\subsection{Fair-Throughput Maximisation}
The fair-throughput maximisation problem of the MISO-OFDM is formulated as
\begin{align}
&\text{(P5):}\max_{ \left\{\mathbf{W}_{j,k}\right\},\left\{\rho_j\right\} }R_{fair}^{MO}\label{Problem 5}\\
&\text{s. t.    } R_j\left( \mathbf{W} , \rho_{j} \right) \geq R_{fair}^{MO},\forall j=1,\cdots,J,\tag{\ref{Problem 5}a}\label{{Problem5}a}\\
&\text{         }\ \ \ \ \  i_j^{MO}(\mathbf{W},\rho_{j})  \geq \mathcal{I}_j^{MO},\forall j=1,\cdots,J,\tag{\ref{Problem 5}b}\label{{Problem5}b}\nonumber\\
&\text{         }\ \ \ \ \  \Vert \mathbf{W}_{j,k} \Vert_2^2\leq p_{j,k},\forall j,k,\tag{\ref{Problem 5}c}\label{{Problem5}c}\nonumber\\
&\text{         }\ \ \ \ \  0\leq\sum_{m=1}^{M}\sum_{k=0}^{N-1}p_{j,k}\leq NP_{tx}^{OFDM},\tag{\ref{Problem 5}d}\label{{Problem5}d}\\
&\text{         }\ \ \ \ \  0\leq\rho_j\leq1,\tag{\ref{Problem 5}e}\label{{Problem5}e}
\end{align}
where the objective function  $R_{fair}^{MO}$ is the fair-throughput and  $\mathcal{I}_j^{MO}$ is the DC requirement of the $j$-th receiver. \eqref{{Problem5}a} represents the throughput constraint for each receiver. \eqref{{Problem5}b} represents the power requirement for all the receivers. \eqref{{Problem5}c} represents the power constraint for the beamformers and \eqref{{Problem5}d} represents the total transmit power constraint for the transmitter. Finally, \eqref{{Problem5}e} represents the power splitter constraint. Since $\gamma_{j,k}\left( \mathbf{W} , \rho_{j} \right)$ is a fractional function and $\widetilde{P}_{EH,j}\left( \mathbf{W} , \rho_{j} \right)$ is a quadratic function,  \eqref{{Problem5}a} and \eqref{{Problem5}b} are non-convex constraints and (P5) is a non-convex problem. By introducing the auxiliary  variable $\{\psi_{j,k}^{ID}\}$, the throughput $R_j$ is reformulated as
\begin{small}
\begin{align}
	\widehat{R}_j\left( \mathbf{W} , \rho_{j} , \boldsymbol{\psi}_{j}^{ID} \right)&=\frac{B}{N+L-1}\log_2\left[ 1+2\sqrt{N(1-\rho_j)}\mathcal{R}\left( \psi_{j,k}^{ID\dagger}\mathbf{H}_{j,k}\mathbf{W}_{j,k} \right) \nonumber\right.\\ 
	&\left.  - \left| \psi_{j,k}^{ID} \right|^2\left( N(1-\rho_j)\sum_{j'\neq j}\left| \mathbf{H}_{j,k}\mathbf{W}_{j,k}\right|^2 +(1-\rho_j)\sigma_0^2 + \sigma_{cov}^2 \right)             \right].
\end{align}
\end{small}
where $\boldsymbol{\psi}_j^{ID}=\left[ \psi_{j,0}^{ID}, \cdots, \psi_{j,N-1}^{ID} \right]^T$.
Similarly, after adopting the quadratic transformation, the constraint \eqref{{Problem5}b} is reformulated as
\begin{align}
\widetilde{P}_{EH,j}(\mathbf{W},\rho_{j})\geq \frac{-k_1+\sqrt{k_1^2+4k_2\mathcal{I}_j^{MO}}}{2k_2},\forall j=1,\cdots,N.\label{powerconstraint2}
\end{align}
By introducing the auxiliary variable $\{ \psi_{j,k}^{EH} \}$, the average energy harvesting power $\widetilde{P}_{EH,j}(\mathbf{W},\rho_{j})$ is reformulated as
\begin{small}
\begin{align}
		\widehat{P}_{EH,j}\left( \mathbf{W} , \rho_{j} , \boldsymbol{\psi}_j^{EH} \right)=&\frac{4\rho_jP_{tx}(L-1)}{N+L-1}  \sum_{i=0}^{L-1}\sum_{m=1}^{M}\left|h_{j,m,i}\right|^2\nonumber\\
	&+\frac{\rho_jN}{N+L-1}\sum_{k=0}^{N-1}\sum_{j'=1}^{J}\left[ 2\mathcal{R}\left(\psi_{j,k}^{EH\dagger}\mathbf{H}_{j,k}\mathbf{W}_{j,k}\right)-\left| \psi_{j,k}^{EH} \right|^2 \right] + \rho_j\sigma_{0}^2,
\end{align}  
\end{small}
where $\boldsymbol{\psi}_j^{EH}=\left[ \psi_{j,0},\cdots,\psi_{j,N-1} \right]^T$.
Then (P5) can be reformulated as
\begin{align}
&\text{(P6):}\max_{ \left\{\mathbf{W}_{j,k}\right\},\left\{\rho_j\right\},\{\psi_{j,k}^{ID}\},\{\psi_{j,k}^{EH}\} }\widehat{R}_{fair}^{MO}\label{Problem 6}\\
&\text{s. t.    } \widehat{R}_j\left( \mathbf{W} , \rho_{j} , \boldsymbol{\psi}_j^{ID} \right) \geq \widehat{R}_{fair}^{MO},\forall j=1,\cdots,J,\tag{\ref{Problem 6}a}\label{{Problem6}a}\\
&\text{         }\ \ \ \ \  \widehat{P}_{EH,j}\left( \mathbf{W} , \rho_{j} , \boldsymbol{\psi}_j \right)\geq \frac{-k_1+\sqrt{k_1^2+4k_2\mathcal{I}_j^{MO}}}{2k_2},\forall j=1,\cdots,J,\tag{\ref{Problem 6}b}\label{{Problem6}b}\nonumber\\
&\text{         }\ \ \ \ \  \eqref{{Problem5}c},\eqref{{Problem5}d},\eqref{{Problem5}e}\nonumber.
\end{align}
In order to solve  (P6), we need to initialize $\{\psi_{j,k}^{EH}\}$ and $\{\mathbf{W}_{j,k}\}$ for satisfying the constraint \eqref{{Problem6}b}. Moreover, the receive power maximisation problem is formulated as
\begin{align}
&\text{(P7):}\max_{ \left\{\mathbf{W}_{j,k}\right\},\{\psi_{j,k}^{EH}\} }P_{fair}\label{Problem 7}\\
&\text{         }\ \ \ \ \  \eqref{{Problem6}b},  \eqref{{Problem5}c},\eqref{{Problem5}d}\nonumber,
\end{align}
where $P_{fair}$ is the fair-power. By alternatively  optimising  $\{\mathbf{W}_{j,k}\}$ and $\{\psi_{j,k}^{EH}\}$, we can solve (P7) and obtain a feasible solution. Firstly, we fix $\psi_{j,k}^{EH}$ and (P7) can be reformulated as
\begin{align}
&\text{(P7-1):}\max_{ \left\{\mathbf{W}_{j,k}\right\} }P_{fair}\label{Problem 7-1}\\
&\text{         }\ \ \ \ \  \eqref{{Problem6}b},  \eqref{{Problem5}c},\eqref{{Problem5}d}\nonumber.
\end{align}
(P7-1) is a convex problem and we can solve it by any CVX tool box. By fixing $\{\mathbf{W}_{j,k}\}$, (P7) can be reformulated as
\begin{align}
&\text{(P7-2):}\max_{ \left\{\psi_{j,k}^{EH}\right\} }P_{fair}\label{Problem 7-2}\\
&\text{         }\ \ \ \ \  \eqref{{Problem6}b}.
\end{align}
The close form solution of (P7-2) can be expressed as
\begin{align}
	\psi_{j,k}^{EH*}=&\mathbf{H}_{j,k}\mathbf{W}_{j,k}.\label{eq:49}
\end{align}
The alternative optimisation (AO)  based method algorithm for solving (P7) is detailed in Algorithm 3.
\begin{algorithm}[!t]
	\linespread{1}\selectfont
	\caption{AO based algorithm for solving (P7).}
	\label{alg:alg3}
	\footnotesize
	\begin{algorithmic}[1]
		\REQUIRE\
		Channel fading coefficient of $\{\mathbf{h}_{j,m}\}$; Transmit power of the BS $P_{tx}^{OFDM}$;  DC requirement $\mathcal{I}_j^{MO}$; Error tolerance $\epsilon$.
		\ENSURE\
		Feasible solution $\psi_{j,k}^{EH*}$.
		\STATE Initialize $\{\rho_{j}\}\leftarrow 1$,  $\{\Psi_{j,k}^{EH}\}\leftarrow 1$, $P_{fair}^1\leftarrow \epsilon$, $P_{fair}^2\leftarrow -\epsilon$;
		\WHILE{$|P_{fair}^1-P_{fair}^2|\geq\epsilon$}
		\STATE $P_{fair}^2\leftarrow P_{fair}^1$;
		\STATE Update $\left\{ \mathbf{W}_{j,k}^* \right\}$ by solving (P7-1);
		\STATE Update ${\psi}_{m,k}^{EH*}$ by Eq. \eqref{eq:49};
		\STATE Update $P_{fair}^1\leftarrow P_{fair}$;
		\ENDWHILE
		\RETURN $\{\psi_{j,k}^{EH*}\}$.
	\end{algorithmic}
\end{algorithm}

Then we solve (P6) by alternatively optimising $\{\mathbf{W}_{j,k}\}$, $\{\rho_j\}$, $\{\psi_{j,k}^{ID}\}$ and $\{\psi_{j,k}^{EH}\}$. First, by fixing $\{\rho_j\}$, $\{\psi_{j,k}^{ID}\}$ and $\{\psi_{j,k}^{EH}\}$, (P6) is reformulated as 
\begin{align}
&\text{(P6-1):}\max_{ \left\{\mathbf{W}_{j,k}\right\} }\widehat{R}_{fair}^{MO}\label{Problem 6-1}\\
&\text{s. t.    } \eqref{{Problem6}a},\eqref{{Problem6}b}, \eqref{{Problem5}c},\eqref{{Problem5}d}\nonumber.
\end{align}
Moreover, when we fix $\{\mathbf{W}_{j,k}\}$, $\{\psi_{j,k}^{ID}\}$ and $\{\psi_{j,k}^{EH}\}$, (P6) is reformulated as
\begin{align}
&\text{(P6-2):}\max_{\left\{\rho_j\right\} }\widehat{R}_{fair}^{MO}\label{Problem 6-2}\\
&\text{s. t.    } \eqref{{Problem6}a},\eqref{{Problem6}b},  
\eqref{{Problem5}e}\nonumber.
\end{align}
Since (P6-1) and (P6-2) are both the convex problem, we can solve them by the convex optimisation tool. Finally, by fixing $\{\mathbf{W}_{j,k}\}$ and $\{\rho_j\}$, (P6) is reformulated as
\begin{align}
&\text{(P6-3):}\max_{ \{\psi_{j,k}^{ID}\},\{\psi_{j,k}^{EH}\} }\widehat{R}_{fair}^{MO}\label{Problem 6-3}\\
&\text{s. t.    } \eqref{{Problem6}a},\eqref{{Problem6}b}.\nonumber
\end{align}
(P4-3) is a quadratically constrained quadratic programs (QCQP) and we can obtain the colsed-form solutions. The optimal $\{\psi_{j,k}^{EH*}\}$ is expressed as  \eqref{eq:49} and the optimal $\{\psi_{j,k}^{ID*}\}$ is expressed as
\begin{align}
	\psi_{j,k}^{ID*}=&\frac{\sqrt{N(1-\rho_j)}\mathbf{H}_{j,k}\mathbf{W}_{j,k} }{\left( N(1-\rho_j)\sum_{j'\neq j}\left|\mathbf{H}_{j,k}\mathbf{W}_{j',k}\right|^2 +(1-\rho_j)\sigma_0^2 + \sigma_{cov}^2 \right) }.\label{eq:53}
\end{align}
The FP based alternating optimisation for solving (P4) is detailed  in Algorithm 3. The proof of the convergence for the Algorithm 4 is similar to the Algorithm 1. However, since the constraint \eqref{{Problem5}c} is non-convex, we can only prove (P5) converges to a stationary point.
\begin{algorithm}[!t]
	\linespread{1}\selectfont
	\caption{FP based alternating optimisation for solving (P5).}
	\label{alg:alg4}
	\footnotesize
	\begin{algorithmic}[1]
		\REQUIRE\
		Channel fading coefficient of $\mathbf{h}_{j,m}$; Transmit power of the BS $P_{tx}^{OFDM}$; DC requirement $\mathcal{I}_j^{MO}$; Feasible auxiliary variable $\{ \psi_{j,k}^{EH} \}$; Error tolerance $\epsilon$.
		\ENSURE\
		Optimal fair-throughput $R_{fair}^{MO*}$.
		\STATE Initialize $\{\rho_{j}\}\leftarrowtail 1$, $\{\psi_{j,k}^{ID}\}\leftarrow 1$, $\widehat{R}_{fair}^{MO1}\leftarrow\epsilon$, $\widehat{R}_{fair}^{MO2}\leftarrow-\epsilon$
		\WHILE{$|\widehat{R}_{fair}^{MO1}-\widehat{R}_{fair}^{MO2}|\geq\epsilon$}
		\STATE $\widehat{R}_{fair}^{MO2}\leftarrow \widehat{R}_{fair}^{MO1}$;
		\STATE Update $\left\{ \mathbf{W}_{j,k}^* \right\}$ by solving (P5-1);
		\STATE Update $\left\{\rho_j^*\right\}$ by solving (P5-2);
		\STATE Update ${\psi}_{j,k}^{EH*}$ by Eq. \eqref{eq:49};
		\STATE Update ${\psi}_{j,k}^{ID*}$ by Eq. \eqref{eq:53};
		\STATE Update $\widehat{R}_{fair}^{MO1}\leftarrow \widehat{R}_{fair}^{MO}$;
		\ENDWHILE
		\STATE Update $R_{fair}^{MO*} \leftarrow \widehat{R}_{fair}^{MO*}$
		\RETURN $R_{fair}^{MO*}$.
	\end{algorithmic}
\end{algorithm}

\subsection{Sum-Throughput Maximisation}
The sum-throughput maximisation problem can be formulated as
\begin{align}
&\text{(P8):}\max_{ \left\{\mathbf{W}_{j,k}\right\},\left\{\rho_j\right\} }R_{sum}^{MO}=\sum_{j=1}^{J}R_j\left( \mathbf{W} , \rho_{j} \right)\label{Problem 8}\\
&\text{s. t.    } \eqref{{Problem5}b},\eqref{{Problem5}c},\eqref{{Problem5}d},\eqref{{Problem5}e}.
\end{align}
Similarly, by introducing the auxilary variables $\{ \psi_{j,k}^{EH} \}$ and $\{ \psi_{j,k}{ID} \}$, (P8) is reformulated as
\begin{align}
&\text{(P9):}\max_{ \left\{\mathbf{W}_{j,k}\right\},\left\{\rho_j\right\}, \{ \psi_{j,k}^{ID} \}, \{\psi_{j,k}^{EH}\} }\widehat{R}_{sum}^{MO}=\sum_{j=1}^{J}\widehat{R}_j\left( \mathbf{W} , \rho_{j} , \boldsymbol{\psi}_j \right)\label{Problem 9}\\
&\text{s. t.    } \eqref{{Problem6}a},\eqref{{Problem5}c},\eqref{{Problem5}d},\eqref{{Problem5}e}.
\end{align}
We are capable of solving (P9) by alternatively optimising $\{\mathbf{W}_{j,k}\}$, $\{\rho_{j}\}$, $\{ \psi_{j,k}^{ID} \}$ and $\{\psi_{j,k}^{EH}\}$. First, by giving fixed $\{\rho_{j}\}$, $\{ \psi_{j,k}^{ID} \}$ and $\{\psi_{j,k}^{EH}\}$, (P9) is reformulated as
\begin{align}
&\text{(P9-1):}\max_{ \left\{\mathbf{W}_{j,k}\right\} }\widehat{R}_{sum}^{MO}=\sum_{j=1}^{J}\widehat{R}_j\left( \mathbf{W} , \rho_{j} , \boldsymbol{\psi}_j \right)\label{Problem 9-1}\\
&\text{s. t.    } \eqref{{Problem6}a},\eqref{{Problem5}c},\eqref{{Problem5}d}.
\end{align}
Given fixed $\{\mathbf{W}_{j,k}\}$, $\{ \psi_{j,k}^{ID} \}$ and $\{\psi_{j,k}^{EH}\}$, (P9) is reformulated as
\begin{align}
&\text{(P9-2):}\max_{ \left\{\rho_j\right\} }\widehat{R}_{sum}^{MO}=\sum_{j=1}^{J}\widehat{R}_j\left( \mathbf{W} , \rho_{j} , \boldsymbol{\psi}_j \right)\label{Problem 9-2}\\
&\text{s. t.    } \eqref{{Problem6}a},\eqref{{Problem5}e}.
\end{align}
The problems (P9-1) and (P9-2) are convex, we can solve them by any CVX tool box. Finally, given fixed $\{\mathbf{W}_{j,k}\}$ and $\{\rho_{j}\}$, (P9) can be reformulated as 
\begin{align}
&\text{(P9-3):}\max_{  \{ \psi_{j,k}^{ID} \}, \{\psi_{j,k}^{EH}\} }\widehat{R}_{sum}^{MO}=\sum_{j=1}^{J}\widehat{R}_j\left( \mathbf{W} , \rho_{j} , \boldsymbol{\psi}_j \right)\label{Problem 9-3}\\
&\text{s. t.    } \eqref{{Problem6}a}.
\end{align}
By letting $ \partial \widehat{R}_{sum}^{MO} / \partial \psi_{j,k}^{EH} = 0 $ and $ \partial \widehat{R}_{sum}^{MO} / \partial \psi_{j,k}^{ID} = 0 $, the optimal $\{ \psi_{j,k}^{EH*} \}$ and $\{ \psi_{j,k}^{ID*} \}$ can be obtained which are expressed as \eqref{eq:49} and \eqref{eq:53}, respectively. The FP based alternating optimising (P8) is detailed in Algorithm 5. The proof of the convergence for the Algorithm 5 is similar to the Algorithm 1. However, since the constraint \eqref{{Problem5}c} is non-convex, we can only prove (P8) converges to a stationary point.
\begin{algorithm}[!t]
	\linespread{1}\selectfont
	\caption{FP based alternating optimisation for solving (P8).}
	\label{alg:alg5}
	\footnotesize
	\begin{algorithmic}[1]
		\REQUIRE\
		Channel fading coefficient of $\mathbf{h}_{j,m}$; Transmit power of the BS $P_{tx}^{OFDM}$; DC requirement $\mathcal{I}_j^{MO}$; Feasible auxiliary variable $\{ \psi_{j,k}^{EH} \}$; Error tolerance $\epsilon$.
		\ENSURE\
		Optimal fair-throughput $R_{fair}^{MO*}$.
		\STATE Initialize $\{\rho_{j}\}\leftarrowtail 1$, $\{\psi_{j,k}^{ID}\}\leftarrow 1$, $\widehat{R}_{sum}^{MO1}\leftarrow\epsilon$, $\widehat{R}_{sum}^{MO2}\leftarrow\epsilon$
		\WHILE{$|\widehat{R}_{sum}^{MO1}-\widehat{R}_{sum}^{MO2}|\geq\epsilon$}
		\STATE $\widehat{R}_{sum}^{MO2}\leftarrow \widehat{R}_{sum}^{MO1}$;
		\STATE Update $\left\{ \mathbf{W}_{j,k}^* \right\}$ by solving (P9-1);
		\STATE Update $\left\{\rho_j^*\right\}$ by solving (P9-2);
		\STATE Update ${\psi}_{j,k}^{EH*}$ by Eq. \eqref{eq:49};
		\STATE Update ${\psi}_{j,k}^{ID*}$ by Eq. \eqref{eq:53};
		\STATE Update $\widehat{R}_{sum}^{MO1}\leftarrow \widehat{R}_{fair}^{MO}$;
		\ENDWHILE
		\STATE Update $R_{sum}^{MO*} \leftarrow \widehat{R}_{sum}^{MO}$
		\RETURN $R_{fair}^{MO*}$.
	\end{algorithmic}
\end{algorithm}

\begin{appendices}

\section{Proof for Theorem 1}
The symbol $s_n$ can be expressed as
\begin{align}
	s_n=\frac{1}{\sqrt{N}}\sum_{k=0}^{N-1}S_k e^{j2\pi nk/N}\label{eq27}. 
\end{align}
Observe from \eqref{eq27}, the symbols $S_0,\cdots,S_{N-1}$ are the CSGR variables and independent each other. Therefore, the symbol $s_n$ is the CSGR and the expection and the variance can be expression can be expression as 
\begin{align}
	\mathbb{E}\left[s_n\right]&=\frac{1}{\sqrt{N}}\sum_{k=0}^{N-1}\mathbb{E}\left[S_k\right]e^{j2\pi nk/N}=0,\\
	Var\left[s_n\right]&=\mathbb{E}\left[s_ns_n^\dagger\right]\nonumber\\
	&=\frac{1}{{N}}\sum_{k=0}^{N-1}\mathbb{E}\left[S_ke^{j2\pi nk/N}S_k^\dagger e^{-j2\pi nk/N}\right]=\frac{1}{{N}}\sum_{k=0}^{N-1}p_{k}=\frac{1}{{N}}\sum_{k=0}^{N-1}\sum_{m=0}^{M-1}p_{m,k}=P_{tx}.
\end{align}
The IDFT of the OFDM symbol vector $\mathbf{S}$ is $\mathbf{s}=\mathbf{S}\mathbf{F}^{\dagger}$, where the DFT matrix $\mathbf{F}$ is a unitary matrix and its different rows or columns are orthogonal. Therefore, for $\forall n_1\neq n_2$, the symbols $s_{n_{1}}$ and $s_{n_2}$  are independent each other according to the orthogonal theorem \cite{Probability}. 
Finally, we proof that $\mathbf{s}\sim \mathcal{CN}\left(0,P_{tx}\mathbf{I}\right)$.
\section{Proof for Theorem 2}
According to the Theorem 1, we konw that $s_{n_1}\sim \mathcal{CN}\left(0,P_tx\right)$ and $s_{n_2}'\sim \mathcal{CN}\left(0,P_tx\right)$. and they can expressed as
\begin{align}
	s_{n_1}&=\frac{1}{\sqrt{N}}\sum_{k=0}^{N-1}S_k e^{j2\pi n_1 k/N},\\
	s_{n_2}'&=\frac{1}{\sqrt{N}}\sum_{k=0}^{N-1}S_k' e^{j2\pi n_2 k/N}.
\end{align}
The covariance of the $s_{n_1}$ and $s_{n_2}'$ can be expressed as
\begin{align}
	Cov\left[ s_{n_1},s_{n_2}' \right]&=\mathbb{E}\left[ s_{n_1}s_{n_2}'^\dagger \right]\nonumber\\
	&=\frac{1}{N}\mathbb{E}\left[ \sum_{k=0}^{N-1}S_k e^{j2\pi n_1 k/N}  \sum_{k=0}^{N-1}S_k'^\dagger e^{-j2\pi n_2 k/N} \right]\nonumber\\
	&=\frac{1}{N}\mathbb{E}\left[ \sum_{k_1=0}^{N-1}\sum_{k_2=0}^{N-1}S_{k_1}S_{k_2}'^\dagger e^{j2\pi\left(n_1k_1-n_2k_2\right)/N} \right]\nonumber\\
	&=\frac{1}{N} \sum_{k_1=0}^{N-1}\sum_{k_2=0}^{N-1}\mathbb{E}\left[S_{k_1}S_{k_2}'^\dagger\right] e^{j2\pi\left(n_1k_1-n_2k_2\right)/N} \nonumber\\
	&=0.
\end{align}
Therefore, the symbols in the different OFDM period are uncorrelated. If two Guassian variables are uncorrelated, then they are independent each other \cite{Probability}. Finally, the symbols $s_{n_1}$ and $s_{n_2}'$ are independent.

\section{Proof for the Convergence of Algorithm 1}
We denote the $k$-th iteration of the SISO-OFDM fair-throughput in the Algorithm 1 as  $R_{fair}^{SO(k)}$, while the variables are denoted as $\mathbf{p}^{(k)}$, $\rho_{j}^{(k)}$ and $\boldsymbol{\psi}_{j}^{(k)}$, respectively. Then the monotonicity of $R_{fair}^{SO(k)}$ with respect to $k$ is derived as
\begin{small}
\begin{align}
	R_{fair}^{SO(k)}=&\min_j R_{j} \left( \mathbf{p}^{(k)},\rho_{j}^{(k)} \right)\nonumber\\
	\overset{(a)}{=}&\min_j \widetilde{R}_j \left( \mathbf{p}^{(k)},\rho_{j}^{(k)},\boldsymbol{\psi}_{j}^{(k)} \right)\nonumber\\
	\leq &\min_j \widetilde{R}_j \left( \mathbf{p}^{(k+1)},\rho_{j}^{(k)},\boldsymbol{\psi}_{j}^{(k)} \right)\nonumber\\
	\leq &\min_j \widetilde{R}_j \left( \mathbf{p}^{(k+1)},\rho_{j}^{(k+1)},\boldsymbol{\psi}_{j}^{(k)} \right)\nonumber\\
	\leq &\min_j \widetilde{R}_j \left( \mathbf{p}^{(k+1)},\rho_{j}^{(k+1)},\boldsymbol{\psi}_{j}^{(k+1)} \right) \nonumber\\
	=& \min_j R_{j} \left( \mathbf{p}^{(k+1)},\rho_{j}^{(k+1)} \right)\nonumber\\
	\overset{(b)}{=}& R_{fair}^{SO(k+1)}.
\end{align}
\end{small}
(a) and (b) hold since (P1) and (P2)  have the same optimal objective value when the auxiliary variable $\boldsymbol{\psi}_{j}$ is optimal. Then we prove $R_{fair}^{SO}$ is bounded by a constant, which is derived as
\begin{small}
\begin{align}
	R_{fair}^{SO}=&\frac{B}{N+L-1} \log_2\left( 1 +  \frac {(1-\rho_j)N\Vert H_{j,k}\Vert_2^2p_{j,k}}   {(1-\rho_j)\sum_{j'\neq j}N\Vert H_{j',k}\Vert_2^2p_{j',k}  + (1-\rho_j)\sigma_{0}^2 + \sigma_{cov}^2 }    \right)  \nonumber\\
	\overset{(c)}{\leq} & \frac{B}{N+L-1} \log_2\left( 1 +  \frac {(1-\rho_j)N\Vert H_{j,k}\Vert_2^2p_{j,k}}   {  \sigma_{cov}^2 }    \right)  \nonumber\\
	\overset{(d)}{\leq} & \frac{B}{N+L-1} \log_2\left( 1 +  \frac {N^2\Vert H_{j,k}\Vert_2^2P_{tx}^{OFDM}}   {  \sigma_{cov}^2 }    \right)  \nonumber\\
	\leq& +\infty
\end{align}
\end{small}
where (c) holds, since we remove the interference and the antenna noise in the denominator of the SINR; (d) holds given the constraints \eqref{{Problem1}c} and \eqref{{Problem1}d}. Finally, $R_{fair}^{SO}$ increases monotonously and it is upper-bounded by a finite value. Therefore, Algorithm 1 converges.

\end{appendices}

\end{document}